\documentclass[a4paper]{scrartcl}

\pdfoutput=1
\usepackage[english]{babel}
\usepackage{authblk}

\usepackage{multirow}
\usepackage{longtable}
\usepackage[utf8]{inputenc}
\usepackage[T1]{fontenc}
\usepackage{lmodern}

\usepackage{amsmath}
\usepackage{amssymb}
\usepackage{units}

\usepackage{mathpartir}
\usepackage{xcolor}

\usepackage{multirow}
\usepackage{alltt}
\usepackage{url}
\usepackage{xspace}
\usepackage{enumerate}

\usepackage{ifthen}

\usepackage{bbm}
\usepackage{stmaryrd}
\usepackage{graphicx}
\usepackage{url}

\usepackage{supertabular}
\usepackage{tabularx}
\usepackage{listings}
\usepackage{subfigure}

\usepackage{floatflt}
\usepackage{tikz}
\usetikzlibrary{petri}
\usetikzlibrary{positioning}
\usetikzlibrary{arrows}
\usepgflibrary{shapes}
\usetikzlibrary{decorations.pathreplacing}
\usetikzlibrary{calc}

\setcounter{tocdepth}{3}

\def\+{\kern 0.2ex}

\font\bigttfont = cmtt12 scaled 1120

\def\code#1{\textsf{\small\bfseries #1}}
\def\vars#1{\textsf{\small #1}}

\definecolor{codegreen}{rgb}{0,0.6,0}
\definecolor{codepurple}{rgb}{0.58,0,0.82}

\lstdefinelanguage{Fiacre}{morekeywords={process,is,states,var,init,wait,from,to,select,end,component,port,priority,par,on,loop,if,then,else,unless,property,assert,ltl,||},
  morekeywords={[2]bool,none,sync}, 
morecomment=[s]{/*}{*/}}

\lstset{language=Fiacre,
        inputencoding=latin1,
        keywordstyle=\bf\sffamily,
        keywordstyle={[2]\itshape},
        basicstyle=\sffamily,
        stringstyle=\bigttfont, 
        commentstyle=\commentaires,
        keywordstyle=\color{magenta},
        commentstyle=\color{codegreen},
        numberstyle=\scriptsize\sffamily\itshape,
        basewidth=0.94ex,showstringspaces=false,
        aboveskip=-2pt,belowskip=-4pt,
        mathescape=true, texcl=true, escapechar=@,
        numberblanklines=false
      }


\let\s\sigma

\newcommand{\rational}{\mathbb{Q}}
\newcommand{\rationalp}{\rational^{+}}
\newcommand{\nat}{\mathbb{N}}

\newcolumntype{C}{>{${}}c<{{}$}}
\newcolumntype{L}{>{${}}l<{{}$}}
\newcolumntype{R}{>{${}}r<{{}$}}

\def\such{\ .\ }

\let\delay\delta

\tikzstyle{transition} =[very thick, rectangle, draw, inner xsep=2mm, inner ysep=0.75mm]
\tikzstyle{vtransition}=[very thick, rectangle, draw, inner ysep=2mm, inner xsep=0.75mm]

\tikzstyle{place}=[circle, draw, minimum size=4ex]
\tikzstyle{every label}=[font=\sf\footnotesize]

\tikzstyle{pre}+=[>=stealth]
\tikzstyle{post}+=[>=stealth]
\tikzstyle{readarc}=[pre, >=*, shorten <=0pt]
\tikzstyle{prio}=[draw, ->, orange, >=stealth, shorten >=1.25pt, shorten <=1.25pt, densely dashed]
\tikzstyle{poids}=[font=\scriptsize\sf]

\tikzstyle{action}=[rectangle, draw, color=black!80, thin, dotted, inner xsep=0.4ex, inner ysep=0.1ex]



\tikzstyle{gil common}=[color=black, thin, draw=none, fill=none, dash pattern=]

\tikzstyle{gil search}=[gil common, draw, dashed, ->, >=triangle 60]
\tikzstyle{gil strong search}=[gil search, draw, ->>]
\tikzstyle{gil context}=[gil common, draw, {[-)}]

\tikzstyle{gil close context}=[gil common, draw, {[-)}]
\tikzstyle{gil open context}=[gil common, draw, {]-)}]
\tikzstyle{gil dotted context}=[gil common, draw, loosely dotted,-]
\tikzstyle{gil open search}=[gil search, draw, {]->}]
\tikzstyle{gil close search}=[gil search, draw, {[->}]
\tikzstyle{gil time context}=[gil common, draw, |-|]
\tikzstyle{gil open strong search}=[gil search, draw, {]->>}]
\tikzstyle{gil close strong search}=[gil search, draw, {[->>}]

\tikzstyle{gil segment}=[gil common, draw, |-|]
\tikzstyle{gil exists}=[gil common, draw,kite,midway,kite vertex angles=60, inner sep=0.4ex]
\tikzstyle{gil always}=[gil common, draw, gil exists, rectangle, inner ysep=0.8ex]
\tikzstyle{gil boite}=[gil common, draw, dotted, thick, rounded corners=4pt]
\tikzstyle{delta}=[gil common, draw,thick,isosceles triangle, anchor=apex, rotate=90, inner sep=0.4ex]

\tikzstyle{gil interval}=[gil common, draw, decorate,decoration={brace}]

\tikzstyle{every pin}=[gil common, pin distance=0.4ex]
\tikzstyle{every pin edge}=[gil common, draw,-]

\tikzstyle{audessus}=[gil common, anchor=south east, inner sep=0em, yshift=1.2ex]
\tikzstyle{audessous}=[gil common, anchor=north east, inner sep=0em, yshift=-1.2ex]
\tikzstyle{inlaudessous}=[gil common, anchor=east, inner sep=1ex, yshift=-0.5ex]
\tikzstyle{inlaudessous2}=[gil common, anchor=east, inner sep=2.2ex, yshift=-0.5ex]

\tikzstyle{timint}=[gil common, midway,above=-0.5ex]

\tikzstyle{snippet}=[gil common, draw, rectangle, rounded corners=0pt, inner ysep=0.15ex, inner xsep=1ex, semithick,
                     densely dotted, draw=black, text width=]

\tikzstyle{patterncommon}=[anchor=text, draw=black]

\tikzstyle{patterntitle}=[patterncommon,rectangle,rounded corners=2pt, inner ysep=0.75ex, inner xsep=3ex]
\tikzstyle{patternexample}=[patterncommon,rectangle,draw=none, inner ysep=0.25ex, inner xsep=0.8ex]


\newcommand{\sem}[1]{[\kern-.5mm[{#1}]\kern-.5mm]}
\newcommand{\interp}[2][{}]{(\kern-.5mm({#2})\kern-.5mm)_{#1}}

\newcommand{\eqdef}{\ensuremath{\overset{\text{\tiny def}}{=}}}

\newcommand{\tick}{\ensuremath{\mathit{Tick}}}
\newcommand{\kleene}{\ensuremath{\mathclose{\overset{*}{\ }}}}
\newcommand{\ltl}[1]{\texttt{#1}}
\newcommand{\ltland}{\,\ensuremath{\wedge}\,}
\newcommand{\ltlo}{\,\ensuremath{\text{\texttt{o}}}\,}
\newcommand{\ltls}{\,\ensuremath{\text{\texttt{*}}}\,}
\newcommand{\ltlor}{\,\ensuremath{\vee}\,}

\makeatletter
\newcommand{\makeord}[1]{
  \edef\@tempa{\the\mathcode`#1 }
  \begingroup\lccode`~=`#1
  \lowercase{\endgroup\edef~}{\mathpunct{\mathchar\@tempa}}
  \mathcode`#1="8000
}

\usepackage{amsthm}

\newtheorem{lemma}{Lemma}

\makeatother
\makeatletter
\newtoks\fintableau
\let\fintableau\@arraycr
\makeatother

\begin{document}


\title{Automating the Verification of\\ Realtime Observers using
  Probes and\\ the Modal mu-calculus\thanks{This work was presented at TTCS 2015, the First IFIP International
Conference on Topics in Theoretical Computer Science, August 26-28,
2015.  Institute for Research in Fundamental Sciences (IPM), Tehran,
Iran.}}
\author[1,2]{Silvano Dal Zilio}
\author[1,2]{Bernard Berthomieu}
\affil[1]{CNRS, LAAS, F-31400 Toulouse,  France}
\affil[2]{Univ de Toulouse, LAAS, F-31400 Toulouse, France}
\date{}
\maketitle
\begin{abstract}
  A classical method for model-checking timed properties---such as
  those expressed using timed extensions of temporal logic---is to
  rely on the use of observers. In this context, a major problem is to
  prove the correctness of observers. Essentially, this boils down to
  proving that: (1) every trace that contradicts a property can be
  detected by the observer; but also that (2) the observer is
  innocuous, meaning that it cannot interfere with the system under
  observation. In this paper, we describe a method for automatically
  testing the correctness of realtime observers. This method is
  obtained by automating an approach often referred to as \emph{visual
    verification}, in which the correctness of a system is performed
  by inspecting a graphical representation of its state space. Our
  approach has been implemented on the tool Tina, a model-checking
  toolbox for Time Petri Net.
\end{abstract}


\section{Introduction}
\label{sec:introduction}

A classical method for model-checking timed behavioral
properties---such as those expressed using timed extensions of
temporal logic---is to rely on the use of observers. In this approach,
we check that a given property, \vars{P}, is valid for a system
\vars{S} by checking the behavior of the system composed with an
observer for the property. That is, for every property \vars{P} of
interest, we need a pair $(\vars{Obs}_P, \phi_{P})$ of a system (the
observer) and a formula. Then property \vars{P} is valid if and only
if the composition of \vars{S} with $\vars{Obs}_P$, denoted (\vars{S}
\code{||} $\vars{Obs}_P$), satisfies $\phi_{P}$. This approach is
useful when the properties are complex, for instance when they include
realtime constraints or involve arithmetic expressions on
variables. Another advantage is that we can often reduce the initial
verification problem to a much simpler model-checking problem, for
example when $\phi_{P}$ is a simple reachability property.

In this context, a major problem is to prove the correctness of
observers. Essentially, this boils down to proving that every trace
that contradicts a property can be detected. But this also involve
proving that an observer will never block the execution of a valid
trace; we say that it is \emph{innocuous} or non-intrusive. In other
words, we need to assure that the ``measurements'' performed by the
observer can be made without affecting the system.

In the present work, we propose to use a model-checking tool chain in
order to check the correctness of observers. We consider observers
related to linear time properties obtained by extending the pattern
specification language of Dwyer et al.~\cite{ksu} with hard, realtime
constraints. In this paper, we take the example of the pattern
``\code{Present} \vars{a} \code{after} \vars{b} \code{within} $[d_1,
d_2[$'', meaning that event \vars{a} must occur within $d_1$ units of
time (u.t.) of the first occurrence of \vars{b}, if any, but not later
than $d_2$. Our approach can be used to prove both the soundness and
correctness of an observer when we fix the values of the timing
constraints (the values of $d_1$ and $d_2$ in this particular
case). 

Our method is not enough, by itself, to prove the correctness of a
verification tool. Indeed, to be totally trustworthy, this will
require the use of more heavy-duty software verification methods, such
as interactive theorem proving. Nonetheless our method is
complementary to these approaches. In particular it can be used to
debug new or optimized definitions of an observer for a given property
before engaging in a more complex formal proof of its correctness.

Our method is obtained by automating an approach often referred to as
\emph{visual verification}, in which the correctness of a system is
performed by inspecting a graphical representation of its state
space. Instead of visual inspection, we check a set of branching time
(modal $\mu$-calculus) properties on the discrete time state space of
a system. These formulas are derived automatically from a definition
of the pattern expressed as a first-order formula over timed
traces. The gist of this method is that, in a discrete time setting,
first-order formulas over timed traces can be expressed,
interchangeably, as regular expressions, LTL formulas or modal
$\mu$-calculus formulas.

This approach has been implemented on the tool Tina~\cite{tina}, a
model-checking toolbox for Time Petri Net~\cite{merlin} (TPN). This
implementation takes advantage of several components of Tina: state
space exploration algorithms with a discrete time semantics (using the
option \texttt{-F1} of Tina); model-checkers for LTL and for modal
$\mu$-calculus, called \emph{selt} and \emph{muse} respectively; a new
notion of \emph{verification probes} recently added to
Fiacre~\cite{Fiacre07,filfmvte2008}, one of the input specification
language of Tina. While model checkers are used to replace visual
verification, probes are used to ensure innocuousness of the
observers.

\subsection{Outline and contributions} The rest of the paper is
organized as follows. In Sect.~\ref{sec2:fiacre}, we give a brief
definition of Fiacre and the use of probes and observers in this
language. In Sect.~\ref{sec3:timedtrace}, we introduce the technical
notations necessary to define the semantics of patterns and time
traces and focus on an example of timed patterns. Before concluding,
we describe the graphical verification method and show how to use a
model-checker to automatize the verification process\footnote{Code is
  available at
  \url{http://www.laas.fr/fiacre/examples/visualverif.html}}.

The theory and technologies underlying our verification method are not
new: model-checking algorithms, semantics of realtime patterns,
connection between path properties and modal logics, \dots\
Nonetheless, we propose a novel way to combine these techniques in
order to check the implementation of observers and in order to replace
traditional ``visual'' verification methods that are prone to human
errors. 

Our paper also makes some contributions at the technical level. In
particular, this is the first paper that documents the notion of
probe, that was only recently added to Fiacre. We believe that our
(language-level) notion of probes is interesting in its own right and
could be adopted in other specification languages.

\section{The Fiacre Language}
\label{sec2:fiacre}

We consider systems modeled using the specification language
Fiacre~\cite{Fiacre07,filfmvte2008}. (Both the system and the
observers are expressed in the same language.)  Fiacre is a
high-level, formal specification language designed to represent both
the behavioral and timing aspects of reactive systems.

Fiacre programs are stratified in two main notions: \emph{processes},
which are well-suited for modeling structured activities, and
\emph{components}, which describes a system as a composition of
processes. Components can be hierarchically composed. We give in
Fig.~\ref{fig/fiacre-process} a simple example of Fiacre specification
for a computer mouse button capable of emitting a double-click
event. The behavior, in this case, is to emit the event \vars{double}
if there are more than two \vars{click} events in {strictly less} than
one unit of time (u.t.).
\begin{figure}[tbh]
  \centering
  \begin{tabular}{c|@{\quad}c}
    \begin{minipage}[t]{0.42\linewidth}
      {\lstinputlisting{process.txt}}
    \end{minipage}
    &
    \begin{minipage}[t]{0.42\linewidth}
      {\lstinputlisting{component.txt}}
    \end{minipage}\\
  \end{tabular}\\
  \caption{A double-click example in Fiacre}
  \label{fig/fiacre-process}
\end{figure}


\subsection{Processes}
A process is defined by a set of parameters and {control states}, each
associated with a set of \emph{complex transitions} (introduced by the
keyword \code{from}). The initial state of a process is the state
corresponding to the first \code{from} declaration.

Complex transitions are expressions that declares how variables are
updated and which transitions may fire. They are built from
deterministic constructs available in classical programming languages
(assignments, conditionals, sequential composition, \dots);
non-deterministic constructs (such as external choice, with the
\code{select} operator); communication on ports; and jump to a state
(with the \code{to} or \code{loop} operators).

For example, in Fig.~\ref{fig/fiacre-process}, we declare a process
named \vars{Push} with four communication ports (\vars{click} to
\vars{delay}) and one local boolean variable, \vars{dbl}.  Ports may
send and receive typed data.  The port type \vars{none} means that no
data is exchanged; these ports simply act as synchronization
events. Regarding complex transitions, the expression related to state
\vars{s1} of \vars{Push}, for instance, declares two possible
transitions from \vars{s1}: (1) on a \vars{click} event, set
\vars{dbl} to true and stay in state \vars{s1}; and (2) on a
\vars{delay} event, change to state \vars{s2}.

\subsection{Components} A {component} is built from the parallel
composition of processes and/or other components, expressed with the
operator \code{par} \vars{P}$_0$ \code{||} \dots \code{||}
\vars{P}$_n$ \code{end}. In a composition, processes can interact both
through synchronization (message-passing) and access to shared
variables (shared memory).

Components are the unit for process instantiation and for declaring
ports and shared variables. The syntax of components allows to
associate timing constraints with communications and to define
priorities between communication events. The ability to express
directly timing constraints in programs is a distinguishing feature of
Fiacre. For example, in the declaration of component \vars{Mouse} (see
Fig.~\ref{fig/fiacre-process}), the \code{port} statement declares a
local event \vars{delay} and asserts that a transition from \vars{s1}
to \vars{s2} should take exactly one unit of time. (Time passes at the
same rate for all the processes.) Additionally, the \code{priority}
statement asserts that a transition on event \vars{click} cannot occur
if a transition on \vars{delay} is also possible.
 
\subsection{Probes and Observers} 
The Fiacre language has been extended, recently, to allow the
definition of {observers}, which are a distinguished category of
sub-programs that interact with other Fiacre components only through
the use of \emph{probes}. A probe is used to observe modifications in
the system without interfering with it; probes react to the occurrence
of an event without engaging in it.

A typical probe declaration is of the form
\vars{path}\code{/}\vars{obs}, where \vars{obs} denotes the observable
and \vars{path} defines its context, that is a path to the component
(or process) instance where \vars{obs} is defined (see for example
{\footnotesize\url{http://www.laas.fr/fiacre/properties.html}}).  In
our setting, observable events are instantaneous actions involved in
the evolution of the system: it can be a synchronization over a port
\vars{p} (denoted \vars{event~p}); a process that enters the state
\vars{s} (denoted \vars{state~s}); or an expression including shared
variables, say \vars{exp}, that changes value (denoted
\vars{value~exp}). For instance, in the case of the \vars{Mouse}
component of Fig.~\ref{fig/fiacre-process}, a probe triggered when the
(only instance of) process \vars{Push} is in state \vars{s2} would
have the form \vars{(Mouse/1/state s2)}. 

The use of probes greatly simplifies the proof of innocuousness of an
observer. In particular, with probes, an observer can only influence a
system by ``blocking the evolution of time'', that is by performing an
infinite sequence of actions in finite time. Therefore, proving that
an observer is innocuous amounts to prove that it has no Zeno
behaviors, which is always possible when a system is bounded.

\begin{figure}[tbh]
  \centering
  \begin{tabular}{l|@{\quad}l}
    \begin{minipage}[t]{0.42\linewidth}
      {\lstinputlisting{obs1.txt}}
    \end{minipage}
    &
    \begin{minipage}[t]{0.42\linewidth}
      {\lstinputlisting{obs2.txt}}
    \end{minipage}
  \end{tabular}
  \caption{A simple observer example\label{fig/fiacre-obs}}
\end{figure}

An observer is a Fiacre component where ports are associated to probes
(using the keyword \code{is}); ports associated with a probe have the
reserved type \vars{sync}. We give a naive example of observer in
Fig.~\ref{fig/fiacre-obs}, where the component \vars{Obs} monitors
synchronizations on the event \vars{click}. In this example, the
process \vars{neverTwice} will reach the state \vars{error} if its
probe parameter, \vars{a}, is triggered more than once.

In the remainder of the text, we use the notation (\vars{Mouse}
\code{||} \vars{Obs}) to denote the program obtained by concatenating
the declaration of these two components (i.e. the code from
Fig.~\ref{fig/fiacre-process} with the code from
Fig.~\ref{fig/fiacre-obs}). As a consequence, we are able to detect if
the system can emit two single click events just by checking if the
process \vars{neverTwice} can reach the state \vars{error} in
(\vars{Mouse} \code{||} \vars{Obs}). This can be easily achieved using
an LTL model-checker (the \emph{selt} tool in our case) with the
property \texttt{[]-(Obs/1/state error)} (meaning that never
\vars{Obs/1} is in state \vars{error}).

\section{Timed Traces and First-Order Formulas over Traces}
\label{sec3:timedtrace}

The semantics of Fiacre (and the properties we want to check) are
based on a notion of \emph{timed traces}, which are sequences mixing
events and time delays. In this context, a ``realtime property'' can
be defined as a set of timed traces, which define timing and
behavioral constraints on the acceptable execution of a system. In
this work, we consider properties derived from realtime patterns, that
can be expressed using first-order formulas over timed traces.

\subsection{Timed Traces} 
In our context, observable events are: communication on a port; the
change of state of a process; and the change of value of a
variable. We use a dense time model, meaning that we consider rational
time delays and work both with strict and non-strict time
bounds. Hence a timed trace is a (possibly infinite) sequence of
events $a, b, \dots$ and durations $\delay \in \rationalp$:
\[\sigma \ ::=\ \epsilon \ \mid\ \sigma \, a \ \mid \ \sigma \,
\delay\]

Given a finite trace $\s$ and a---possibly infinite---trace $\s'$, we
denote $\s \s'$ the \emph{concatenation} of $\s$ and $\s'$.  We will
also use the expression $\Delta(\s)$ to denote the duration (time
length) of a trace $\s$. The semantics of a system expressed with
Fiacre, say \vars{S}, can be defined as a set $\sem{\vars{S}}$ of
timed traces. We use the notation $\s \models \vars{S}$ when the trace
$\s$ is in the set $\sem{\vars{S}}$.  The semantics of a property
(timed pattern) will be expressed as the set of all timed traces where
the pattern holds.  We say that a system \vars{S} satisfies a timed
requirement \vars{P} if $\sem{\vars{S}} \subseteq \sem{\vars{P}}$.

\subsection{Realtime Properties and their Semantics}
We propose to define properties using First-Order Formulas over Timed
Traces (FOTT). A FOTT formula $\Phi(\vec x)$, with free variables
$\vec x = (x_1, \dots, x_n)$, is a first-order logic formula over
traces with equality between traces ($\s = \s'$), comparison between a
duration and an interval ($\Delta(\s) \in I$) and concatenation $(\s =
\s_1 \, \s_2$).
\[
\Phi(\vec x) \ ::=\ \Phi \wedge \Phi' \ \mid\ \neg \Phi \ \mid \
\exists x \such \Phi \ \mid\ (x = \s) \ \mid\ (x = y \, z) \ \mid \
(\Delta(x) \in I) \]
For instance, when referring to a timed trace $\s$ and an event $a$,
the following formula is a tautology if the event $a$ does not occur
in $\s$:
\[
(a \notin \s) \ \eqdef\ \neg \left ( \exists x_1, x_2, x_3 \such (\s =
  x_1 \, x_2) \wedge (x_2 = a\, x_3) \right ) \]
Likewise, we can define the ``scope'' $\s$ \code{after} $b$---that
determines the part of a trace $\s$ located after the first occurrence
of $b$---as the trace $\s'$ denoted by the first-order formula:
$\exists x, y \such (\s = x\, y) \wedge (y = b\, \s') \wedge \left (b
  \notin x \right )$.

The semantics of a formula $\Phi(x_1, \dots, x_n)$ is a set of
valuation functions $\varsigma$ associating a trace $\s_i =
\varsigma(x_i)$ to each of the variable $x_i$ with $i \in 1..n$, also
denoted $[ x_i \mapsto \s_i ]_{i \in 1..n}$. The semantics of $\Phi$
can be defined inductively as follows:
\[
\begin{array}{lcl@{\quad}lcl}
  \sem{\Phi(\vec x) \wedge \Psi(\vec x)}  &=&  \sem{\Phi(\vec x)} \cap
  \sem{\Psi(\vec x)} &
  \sem{x = \s}  &=&  \{ \varsigma
  \mid \varsigma(x) = \s \}
  \\
  \sem{\exists y \such \Phi(\vec x)}  &=& \{ \varsigma
  \mid \varsigma + [y \mapsto \s] \in \sem{\Phi(\vec x)} \} &
  \sem{x = y \, z}  &=&  \{ \varsigma
  \mid \varsigma(x) = \varsigma(y) \, \varsigma(z) \}
  \\
  \sem{\Delta(x) \in I}  &=&  \{ \varsigma
  \mid \Delta(\varsigma(x)) \in I \}
  \\
\end{array}
\]

With these definitions, a \emph{regular set of time traces} is the set
of traces ``solutions'' of an existential FOTT formula with a single
free variable, $\Phi(x)$; that is the set of traces $\s$ such that the
valuation $[x \mapsto \s]$ is in $\sem{\Phi(x)}$.

In this paper, we will mainly restrict ourselves to the special case
of timed traces where events occur at integer dates; i.e. we restrict
delays $\delay$ to be in $\nat$ rather than in $\rationalp$. These
traces can be generated using a ``discrete time'' abstraction of the
models, where special transitions (labeled with \vars{t}) are used to
model the flow of time. Label \vars{t} stands for the ``tick'' of the
logical clock.


The discrete time semantics will be enough to prove all the properties
needed in our study. Indeed, when a model contain only ``closed timing
constraints'' (of the kind $[d_1, d_2]$ or $[d_1, \infty[$), the
discrete time semantics is enough to check reachability
properties. Actually it is enough to check every formulas in the
existential fragment of CTL\kleene\ without next
operator~\cite{janowska2011towards}.

With discrete time, a delay $\delta$ can be replaced by sequences of
$\delta$ \vars{t}'s, and therefore a finite timed trace can be simply
interpreted as a word. In the remainder, we also consider a special
symbol, \vars{z}, that stands for internal actions of the
system. Hence it is possible to interpret the semantics of (discrete)
FOTT specification as a language over the alphabet $A = \{\vars{z},
\vars{t}, \vars{a}, \vars{b}, \dots \}$. Actually, in the discrete
case, we can show that a regular set of time traces is also a regular
language. For example, the semantics of the formula $\exists y, z, w
\such \left ( (x = y \, z) \wedge (z = a \, w) \right)$ is the regular
language corresponding to the expression $A\kleene \cdot \vars{a}
\cdot A\kleene$.

This connection between different type of logics is at the core of our
approach. Our method could be applied to more high-level property
languages, such as timed extension of temporal logic~\cite{SRPMTL},
but would require a more complex encoding into LTL when modalities can
be nested.

\subsection{Our Running Example: the Present Pattern}
Users of Fiacre have access to a catalog of specification patterns
based on a hierarchical classification borrowed from
Dwyer~\cite{ksu}. Patterns are built from five basic
categories---existence, absence, universality, response and
precedence---and can be composed using logical connectives.  In each
category, generic patterns may be specialized using \emph{scope
  modifiers}---such as before, after, between---that limit the range
of the execution trace over which the pattern must hold. Finally,
timed patterns are obtained using one of two possible kind of
\emph{timing modifiers} that limit the possible dates of events
referred in the pattern: \textbf{within $I$}---used to constraint the
delay between two given events to be in the time interval $I$---and
\textbf{lasting $d$}---used to constraint the length of time during
which a given condition holds (without interruption) to be greater
than $d$. 

Due to limited space, we study only one example of timed pattern,
namely {\code{Present} \vars{a} \code{after} \vars{b} \code{within}
  $[d_1, d_2[$}. A complete catalog is available in~\cite{FRP11}.
This is a simple example of {existence} patterns. Existence patterns
are used to express that, in every trace of the system, some events
must occur.  This pattern holds for traces such that the event
\vars{a} occurs at a date $t_0$ after the first occurrence of \vars{b}
with $t_0 \in [d_1, d_2[$. The property is also satisfied if \vars{b}
never holds. Hence traces $\sigma$ that satisfy this pattern are
models of the existential FOTT formula:
\[
\mathrm{Pres}(x) \ \eqdef\ (b \notin x) \ \vee\ \exists y, z, w \such
\left ( (x = y \, b \, z \, a\, w) \wedge (b \notin y) \wedge
  (\Delta(z) \in [d_1, d_2[) \right ) 
\]

\lstinputlisting[float=tbph,label=lproc1,frame=single,captionpos=b,caption=\protect{Observer
  for the pattern: \code{Present} \vars{a} \code{after} \vars{b}
  \code{within} $[d_1,d_2[$}]{presentafter1.txt}

With the discrete semantics, formula $\mathrm{Pres}(x)$ matches
exactly the words of the form $w_1 \, \vars{b} \, w_2 \, \vars{a} \,
w_3$ where $w_1$ contains no occurrences of \vars{b} and $w_2$
contains exactly $k$ occurrences of \vars{t} with $k \in [d_1,
d_2[$. (This is a regular language.) We show in the next section how
to (semi-)automatically generate the regular expression corresponding
to such FOTT formulas.

We give an example of observer associated to this pattern in
Listing~\ref{lproc1}. This observer is composed of one process that
monitors the system through the ports \vars{a} and \vars{b} (that
should be instantiated with the relevant probes). The process is
initially in state \vars{idle} and moves to \vars{start} when \vars{b}
is triggered. When in state \vars{start} for $d_1$ unit of time, the
observer moves to state \vars{watch} (this is the meaning of the
\code{wait} operator). The \code{select} operator is a
non-deterministic choice, with \code{unless} coding priorities. Hence,
in state \vars{watch}, the observer moves to \vars{ok} if an \vars{a}
occurs, unless a duration equals to $(d_2-d_1)$ elapses, in which case
it moves to the state \vars{error}. As a consequence, the pattern is
false whenever the probe \vars{(Present/state error)} is
reachable. Hence the formula associated to the pattern is $\phi_P
\eqdef \vars{[] - (Present/state error)}$.

To prove that an observer \vars{Obs} for the pattern \vars{P} is
correct, we need to prove that, for every system \vars{S}, the program
(\vars{S} \code{||} \vars{Obs}) satisfies the formula $\phi_{P}$ if
and only if $\sem{\vars{S}} \subseteq
\sem{\vars{P}}$. In~\cite{FRP11}, we have defined a mathematical
framework to formally prove these kind of properties, but this
framework relies on manual proofs and is not supported by any
tooling. Efforts are also under way to completely mechanize these
proofs using the Coq proof assistant~\cite{garnacho12}. Nonetheless,
formal proofs of correctness can be quite tedious. Therefore, to
detect possible problems with an observer early on (that is, before
spending a lot of efforts doing a formal proof of correctness) we also
rely on a ``visual'' verification method, that is akin to debugging
our observers.

In the next section, we show how to apply the visual verification
approach on our running example. One of the objective of our work is
to replace this visual verification step with a more formal
approach. This is done in Sect.~\ref{sec:autom-visu-verif}.

\section{Visual Verification of Observers}
\label{sec 4:patterns verification}

In the remainder of this section, we describe the visual verification
method using the particular case of the pattern \code{Present}
\vars{a} \code{after} \vars{b} \code{within} $[4, 5[$; we assume that
\vars{Obs} is the observer \vars{Present} defined in
Listing~\ref{lproc1}, that $d_1 = 4$ and that $d_2 = 5$. 

To prove that the observer \vars{Present} is correct, we need to
prove, for every system \vars{S}, the equivalence between two facts:
(1) the state \vars{(Present/state error)} is not reachable in the
program (\vars{S} \code{||} \vars{Present[a, b]}); and (2) the traces
of \vars{P} are valid for the property Pres, i.e. $\sem{\vars{S}}
\subseteq \sem{\mathrm{Pres}}$.

The first step is to get rid of the universal quantification on all
possible systems, \vars{S}, that is introduced by our definition of
correctness. The idea is to check the observer on a particular Fiacre
program---called \vars{Universal}---that can generate all possible
combinations of delays and events \vars{a}, \vars{b} and \vars{z}.  We
give an example of universal process in Listing~\ref{luniv}.
The process \vars{Universal} has only one state and three possible
transitions. Each transition changes the value of a shared integer
variable, \vars{x}. The first and second transitions of
\vars{Universal} can be fired without time constraints. In our
context, the probe \vars{a} will be triggered to the event ``setting
\vars{x} to 1'' and \vars{b} to ``setting \vars{x} to 2''. The third
transition reset the value of \vars{x} to 0 immediately and
corresponds to the internal event \vars{z}.

{\lstinputlisting
  [float=tbph,label=luniv,captionpos=b,caption=\protect{Universal
    program in Fiacre},frame=single]{universal.txt}}

We can now use our verification toolchain to generate the state graph
for the program (\vars{Universal} \code{||} \vars{Present}) using a
discrete time exploration construction. This can be obtained using the
flag \texttt{-F1} in {Tina} (it is possible to generate a state
graph with many different abstractions with Tina, including dense time
models).

The resulting graph is displayed in Fig.~\ref{fig:example}.  This
state graph has been generated and printed using the tool \emph{nd},
which is also part of the Tina toolset; nd is an editor and animator
for extended Time Petri Nets that can export nets and state graphs in
several, machine readable formats. This graph has only $26$ states and
can therefore be easily managed manually. The main factor commanding
the number of states is the value of the timing constraints used in
the pattern; in our observations, all the generated state graphs were
of manageable size.

The transitions in the state graph are also quite straightforward: we
find the visible and internal transitions as before, labeled with
\vars{a}, \vars{b}, \vars{z} and \vars{t}. For ease of reading, we
have also changed the labels of internal transitions in the observer
\vars{Present}. For instance, the transition from state 2 to 3
corresponds to the observer entering the state \vars{start}; likewise
for the transitions labeled with \vars{watch}, \vars{stop} and
\vars{error}. The states where the observer is in state \vars{error}
(the states that contradict the property $\phi_P \eqdef \vars{[] -
  (Present/state error)}$) are $\mathit{Errors} = \{20, 22, 23\}$.

\begin{figure}[ht]
\centering
\begin{tabular}[c]{l}
\includegraphics[width=0.9\textwidth]{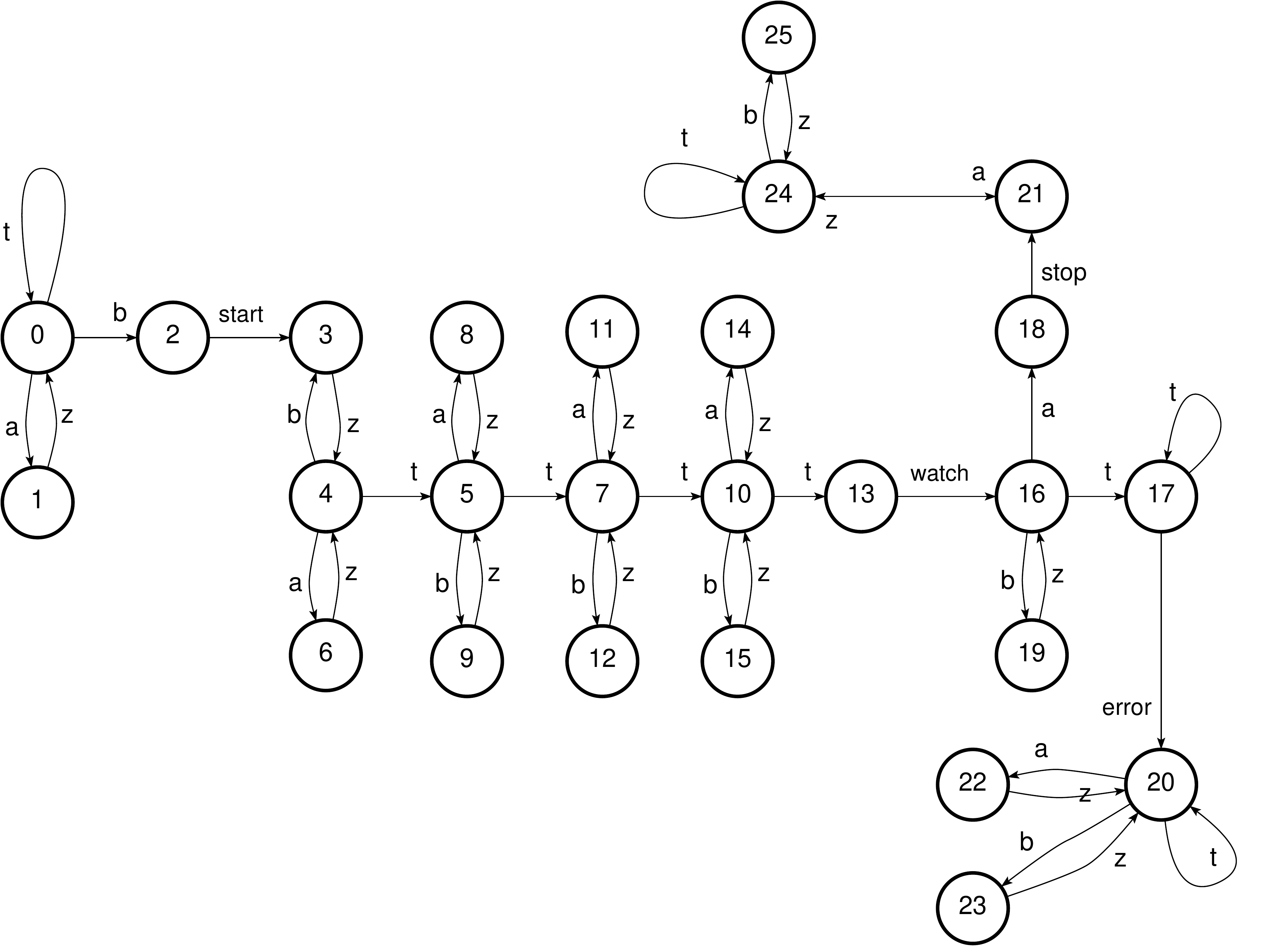}\\
\end{tabular}
\caption{State graph for (\vars{Universal} \code{||} \vars{Present})}
\label{fig:example}
\end{figure}

We can already debug the pattern \code{Present} \vars{a} \code{after}
\vars{b} \code{within} $[4,5[$ by visually inspecting the state
graph.

For \emph{soundness}, we need to check that, when the pattern is not
satisfied---for traces $\s$ that do not satisfy formula
$\mathrm{Pres}$---then the observer will detect a problem (observer
\vars{Present} eventually reaches a state in the set \emph{Errors}).

For \emph{innocuousness} we need to check that, from any state, it is
always possible to reach a state where event \vars{a} (respectively
\vars{b} and \vars{t}) can fire. Indeed, this means that the observer
cannot selectively remove the observation of a particular sequence of
external transitions or the passing of time. 

This graphical verification method has some drawbacks. As such, it
relies on a discrete time model and only works for fixed values of the
timing parameters (we have to fix the value of $d_1$ and
$d_2$). Nonetheless, it is usually enough to catch many errors in the
observer before we try to prove the observer correct more formally.

\section{Automating  the Visual Verification Method}
\label{sec:autom-visu-verif}

A problem with the previous approach is that it essentially relies on
an informal inspection (and on human interaction). We show how to
solve this problem by replacing the visual inspection of the state
graph by the verification of modal $\mu$-calculus formulas. (the Tina
toolset includes a model-checker for the $\mu$-calculus called
\emph{muse}.)  The general idea rests on the fact that we can
interpret the state graph as a finite state automaton and (some) sets
of traces as regular languages. This analogy is generally quite useful
when dealing with model-checking problems.  We start by defining some
useful notations.

\subsection{Label Expressions} 

Label expressions are boolean expressions denoting a set of
(transition) labels. For instance,
$A_\text{ext} = (\vars{a} \vee \vars{b})$ denotes the external
transitions, while the expression \ltl{-(\vars{a}\ltlor \vars{b}\ltlor
  \vars{t})} is only matched by the silent transition label. We will
also use the expression $\top$ to denote the conjunction of all
possible labels, e.g.
$\top = \text{\vars{(-b)}} \vee \text{\vars{b}}$. The model checker
\emph{muse} allows the definition of label expressions using the same
syntax.

\subsection{Regular (Path) Expressions} 

In the following, we consider regular expressions build from label
expressions. For example, the regular expression
$\vars{t} \cdot \text{\vars{(- t)}}\kleene $ denotes traces of
duration 1 with no events occurring at time $0$.
\begin{equation}
  \tick \ \eqdef\  \vars{t} \cdot
  \text{\vars{(-t)}}\kleene\label{eq:1}
\end{equation}
We remark that it is possible to define the set of (discrete) traces
where the FOTT formula {Pres} holds using the union of two regular
languages: (1) the traces where \vars{b} never occurs, $R_1 =
\text{\vars{(- b)}}\kleene$; and (2) the traces where there is an
\vars{a} four units of time after the first \vars{b}. In this
particular case, $R_2$ is a regular expression corresponding to the
property $(x = y \, \vars{b} \, z \, \vars{a}\, w) \wedge (\vars{b}
\notin y) \wedge (\Delta(z) \in [4,5[$\,)
\begin{eqnarray}
  Pres &\ \eqdef\ & \text{\vars{(- b)}}\kleene \ \vee\ R_2\\
  R_2 &\ \eqdef\ & \text{\vars{(-b)}}\kleene \cdot \vars{b} \cdot
  \text{\vars{(- t)}}\kleene \cdot \tick \cdot \tick \cdot \tick \cdot
  \tick \cdot \vars{a} \cdot \top\kleene\label{eqn:R2}
\end{eqnarray}
By construction, the regular language associated to $R_1 \vee R_2$ is
exactly the set of finite traces matching (the discrete semantics) of
Pres.  In the most general case, a regular expressions can always be
automatically generated from an existential FOTT formula when the time
constraints of delay expressions are fixed (the intervals $I$ in the
occurrences of ($\Delta(x) \in I$)\,).

The next step is to check that the observer agrees with every trace
conforming to $R_2$. For this we simply need to check that, starting
from the initial state of (\vars{Universal} \code{||} \vars{Present}),
it is not possible to reach a state in the set \emph{Errors} by
following a sequence of transitions labeled by a word in $R_2$. 

This is a simple instance of a language inclusion problem between
finite state automata. More precisely, if $\mathit{Present}$ is the
set of states visited when accepting the traces in $R_1 \vee R_2$, we
need to check that $\mathit{Errors}$ is included in the complement of
the set \emph{Present} (denoted $\overline{\mathit{Present}}$). In our
example of Fig.~\ref{fig:example}, we have that
$\overline{\mathit{Present}}= \{17, 20, 22, 23\}$, and therefore
$\mathit{Errors} \subseteq \overline{\mathit{Present}}$.

This automata-based approach has still some drawbacks. This is what
will motivate our use of a branching time logic in the next
section. In particular, this method is not enough to check the
soundness or the innocuousness of the observer. For innocuousness, we
need to check that every event may always eventually
happen. Concerning soundness, we need to prove that $\mathit{Errors}
\supseteq \overline{\mathit{Present}}$; which is false in our case.
The problem lies in the treatment of time divergence (and of
fairness), as can be seen from one of the counter-example produced
when we use our LTL model-checker to check the soundness property,
namely: \ltl{b.start.z.t.t.t.t.watch.t.t.$\cdots$} (ending with a
cycle of \texttt{t} transitions). This is an example where the error
transition is continuously enabled but never fired.

\subsection{Branching Time Specification} 

We show how to interpret regular expressions over traces using a modal
logic. In this case, the target logic is a modal $\mu$-calculus with
operators for forward and backward traversal of a state graph . (Many
temporal logics can be encoded in the $\mu$-calculus, including
CTL\kleene). In this context, the semantics of a formula $\psi$ over a
Kripke structure (a state graph) is the set of states where $\psi$
holds.
\[
\psi \ ::=\ \phi \wedge \psi \ \mid\ \neg \psi \ \mid \
\vars{<A>}\psi\ \mid\ \psi\vars{<A>} \ \mid\ X \ \mid\ (\min X\,|\,
\psi) \]

The basic modalities in the logic are \ltl{<$A$>$\psi$} and
\ltl{$\psi$<$A$>}, where $A$ is a label expression.  A state $s$ is in
$\text{\ltl{<$A$>}}\psi$ if and only if there is a (successor) state
$s'$ in $\psi$ and a transition from $s$ to $s'$ with a label in
A. Symmetrically, $s$ is in $\psi\text{\ltl{<$A$>}}$ if and only if
there is a (predecessor) state $s'$ in $\psi$ and a transition from
$s'$ to $s$ with a label in A. In the following, we will also use two
constants, \ltl{T}, the true formula (matching all the states), and
\ltl{`0}, that denotes the initial state of the model; and the least
fixpoint operator \ltl{min X | $\psi$(X)}.

For example, the formula \ltl{<\vars{a}>T} matches all the states that
are the source of an \vars{a}-transition, likewise
\ltl{Reach\_\vars{a}} \ \eqdef\ \ltl{min X\,|\,(<\vars{a}>T \ltlor
  <Z>X)} matches all the states that can lead to an
\vars{a}-transition using only internal transitions. As a consequence,
we can test innocuousness by checking that the formula
\ltl{(Reach\_\vars{a} \ltland Reach\_\vars{b} \ltland
  Reach\_\vars{t})} is true for all states.

The soundness proof rely on an encoding from regular path expressions
into modal formulas. We define two encodings: $\interp{R}$ that
matches the states encountered while firing a trace matching a regular
expression $R$; and $\interp[e]{R}$ that matches the state reached (at
the end) of a finite trace in $R$. These encodings rely on two derived
operators. (Again, we assume here that $A$ is a label expression.)
\[
\begin{array}[c]{l}
  \hfill {\psi \,\ltlo\, A  \ \eqdef\  \psi\ltl{<A>}}
  \hspace*{4em}
  {\psi \,\ltl{*}\, A \ \eqdef\ \ltl{min X | $\psi$\,\ltlor\,X<$A$>}}
  \hspace*{2em}\\[1em]

  \begin{array}[c]{lcl@{\quad}|@{\quad}lcl}

    \interp[e]{R \cdot A} & \eqdef & \interp[e]{R} \,\ltl{o}\, A &
    \interp{R \cdot A} & \eqdef & \interp{R} \vee \interp[e]{R
      \cdot A}\\

    \interp[e]{R \cdot A\kleene} & \ \eqdef\ & \interp[e]{R} \,\ltl{*}\, A &
    \interp{R \cdot A\kleene} & \ \eqdef\ & \interp{R} \vee \interp[e]{R \cdot A\kleene}\\
    
    
    \interp[e]{R \cdot \tick} & \ \eqdef\ &
    \ltl{(} \interp[e]{R} \ltl{\,o\,t)\,*\,(-t)} & 
    \interp{R \cdot \tick} & \ \eqdef\ &
    \interp{R} \vee \interp[e]{R \cdot \tick}\\

    \interp[e]{R_1 \vee R_2} & \ \eqdef\ &
    \interp[e]{R_1} \vee \interp[e]{R_2}  & 
    \interp{R_1 \vee R_2} & \ \eqdef\ &
    \interp{R_1} \vee \interp{R_2}\\

    \interp[e]{\epsilon} & \ \eqdef\ &
    \ltl{`0}
    & 
    \interp{\epsilon} & \ \eqdef\ &
    \ltl{`0}
  \end{array}\\
\end{array}
\]

\begin{lemma}
  Given a Kripke structure $K$, the states matching the formula
  $\interp[e]{R}$ (respectively $\interp{R}$\/) in $K$ are the states
  reachable from the initial state after firing (resp. all the states
  reachable while firing) a sequence of transitions matching $R$.
\end{lemma}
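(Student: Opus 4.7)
The plan is to proceed by structural induction on the regular expression $R$, proving the claims about $\interp[e]{R}$ and $\interp{R}$ simultaneously since the latter is defined in terms of the former in every clause. The cases follow exactly the defining equations of the two encodings: the base case $R = \epsilon$ and the inductive cases $R = R' \cdot A$, $R = R' \cdot A\kleene$, $R = R' \cdot \tick$, and $R = R_1 \vee R_2$. Throughout, I would read a Kripke structure $K$ as a labelled transition system with a distinguished initial state, so that $\ltl{`0}$ is a singleton and $\psi\ltl{<A>}$ gives the $A$-successors of states in $\psi$.

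In the base case both encodings evaluate to $\ltl{`0}$, which matches precisely the initial state, and that is exactly where the empty trace begins and ends. The disjunction case is immediate because both encodings distribute over $\vee$ and the trace semantics of $R_1 \vee R_2$ is the union of the trace semantics of $R_1$ and $R_2$. For $R = R' \cdot A$, a trace of $R$ is a trace of $R'$ ending at some state $s$ followed by one $A$-labelled transition to a successor $s'$; the induction hypothesis yields $s \in \interp[e]{R'}$, so $s' \in \interp[e]{R'}\ltlo A = \interp[e]{R' \cdot A}$. The ``all visited states'' statement then follows because the states visited along $R' \cdot A$ are those visited along $R'$ (captured by $\interp{R'}$ via induction) together with the new endpoint $s'$ (captured by $\interp[e]{R' \cdot A}$), matching the defining equation $\interp{R' \cdot A} = \interp{R'} \vee \interp[e]{R' \cdot A}$.

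The main technical step is the Kleene case $R = R' \cdot A\kleene$, where I need to show that $\interp[e]{R' \cdot A\kleene}$ coincides with the set of states reachable from some state in $\interp[e]{R'}$ by zero or more $A$-transitions. By definition this set is the least fixpoint $\min X \mid \interp[e]{R'} \ltlor X\ltl{<A>}$; by Knaster--Tarski, this equals the union of the iterates $X_0 = \emptyset$ and $X_{k+1} = \interp[e]{R'} \cup X_k\ltl{<A>}$. A straightforward side-induction on $k$ shows that $X_k$ is exactly the set of states reachable from $\interp[e]{R'}$ in at most $k$ $A$-steps, giving the desired trace characterisation in the limit. The tick case $R = R' \cdot \tick$ is handled in the same way after unfolding $\tick \eqdef \vars{t} \cdot \vars{(-t)}\kleene$ and using $\interp[e]{R' \cdot \tick} = (\interp[e]{R'} \ltlo \vars{t})\ltl{*}\vars{(-t)}$: the iterates now enumerate states reached after firing $\vars{t}$ and any number of silent actions.

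The hard part will be handling the ``all visited states'' statement for the Kleene and tick cases, since intermediate states reached along the iteration must themselves belong to $\interp{R' \cdot A\kleene}$. The key observation is that each such intermediate state is an end state of some prefix $R' \cdot A^k$ and therefore lies in $\interp[e]{R' \cdot A\kleene}$, which in turn is contained in $\interp{R' \cdot A\kleene}$ by the defining equation $\interp{R' \cdot A\kleene} = \interp{R'} \vee \interp[e]{R' \cdot A\kleene}$. A similar argument covers the tick case; in both situations, the minimality of the fixpoint is needed to rule out spurious states that are not actually visited along any trace matching $R$.
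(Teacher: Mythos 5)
Your proposal is correct and follows essentially the same route as the paper's own (much briefer) proof sketch: structural induction on $R$, with the Kleene/tick cases resolved by the least-fixpoint characterisation of $\psi\,\ltl{*}\,A$ and the base case handled by the observation that $\ltl{`0}$ anchors evaluation at the initial state. Your treatment of the fixpoint iterates and of the intermediate states in the ``all visited states'' claim simply fills in details the paper leaves implicit.
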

\begin{proof}[Sketch]
  By induction on the definition of $R$. For example, if we assume
  that $\psi$ correspond to the regular expression $R$, then $\psi
  \ltls A$ matches all the states reachable from states where $\psi$
  is true using (finite) sequences of transition with label in A;
  i.e. formula $\psi$ \ltl{*} $A$ corresponds to $R \cdot A\kleene$.
  Likewise, we use the interpretation of the empty expression,
  $\epsilon$, to prefix every formula with the constant \ltl{`0} (that
  will only match the initial state). This is necessary since
  $\mu$-calculus formulas are evaluated on all states whereas regular
  path expressions are evaluated from the initial state.\qed
\end{proof}

For example, we give the formula for $\interp[e]{R_2}$ below, where
$\psi\,\ltl{o}\,\ltl{Tick}$ stands for the expression
$(\psi\ltl{\,o\,t})\ltl{\,*\,(-t)}$:
\[
\interp[e]{R_2}  \ \eqdef\  \ltl{`0 * (-b) o b * (-t) o Tick o Tick o Tick o Tick o
  a * T}
\]
If $\psi_\mathit{Err}$ is a modal $\mu$-calculus formula that matches
the error condition of the observer, then we can check the correctness
and soundness of the observer \vars{Present} by proving that the
equivalence (EQ), below, is a tautology (that it is true on every
states of (\vars{Universal} \code{||} \vars{Present})).
\[
\tag{EQ}
\interp{\mathrm{Pres}} \ \Leftrightarrow\ - \psi_\mathit{Err}
\]
Again, we can interpret the ``error condition'' using the
$\mu$-calculus.  The definition of errors is a little bit more
involved than in the previous case. We say that a state is in error if
the transition \vars{error} is enabled (the formula \ltl{<error>T} is
true) or if the state can only be reached by firing the \vars{error}
transition (which corresponds to the formula
\ltl{(T<error>\,*\,T)\ltland\,(`0\,*\,(- error))}. Hence
$\psi_\mathit{Err}$ is the disjunction of these two properties:
\[
\psi_\mathit{Err} \quad \eqdef\quad \ltl{<error>T \ \ltlor\
  ((T<error>\,*\,T) \,\ltland\, - (`0\,*\,(-error)))}
\]
The formula (EQ) can be checked almost immediately (less than
\unit[1]{s} on a standard computer) for models of a few thousands
states using \emph{muse}. Listing~\ref{muse} gives a \emph{muse}
script file that can be used to test this equivalence relation.


\lstdefinelanguage{MMC}
{morekeywords={infix,op,output,set},
  sensitive=true,
  literate=%
  {=}{{\(=\,\)}}1%
  {/\\}{{\(\wedge\,\)}}1%
  {\\/}{{\(\vee\,\)}}1%
  {<}{{\(\langle\)}}1%
  {>}{{\(\rangle\)}}1%
  {<=>}{{\(\,\Leftrightarrow\,\)}}3%
}

\lstset{language=MMC,
        inputencoding=latin1,
        keywordstyle=\bf\sffamily,
        keywordstyle={[2]\itshape},
        keywordstyle=\color{magenta},
        morecomment=[l]{\#},
        columns=flexible,
        basicstyle=\sffamily,
        commentstyle=\color{codegreen},
        numberstyle=\scriptsize\sffamily\itshape,
        aboveskip=-2pt,
        belowskip=-4pt,
        mathescape=true, 
        texcl=true, 
        escapechar=@,
        numberblanklines=false
      }


\lstinputlisting[float=tbph,label=muse,frame=single,captionpos=b,caption=\protect{Script
  file for \emph{muse} to check that $\interp{\mathrm{Pres}}
  \Leftrightarrow - \psi_\mathit{Err}$ is a tautology}]{pp.txt}

\section{Related Work and Conclusion}
\label{sec 6:related-work}

Few works consider the verification of model-checking tools. Indeed,
most of the existing approaches concentrate on the verification of the
model-checking algorithms, rather than on the verification of the
tools themselves. For example, Smaus et al.~\cite{smaus09} provide a
formal proof of an algorithm for generating Büchi automata from a LTL
formula using the Isabelle interactive theorem prover. This algorithm
is at the heart of many LTL model-checker based on an
automata-theoretic approach. The problem of verifying verification
tools also appears in conjunction with certification issues. In
particular, many certification norms, such as the DO-178B, requires
that any tool used for the development of a critical equipment be
qualified at the same level of criticality than the equipment. (Of
course, certification does not necessarily mean formal proof!) In this
context, we can cite the work done on the certification of the SCADE
compiler~\cite{scade}, a tool-suite based on the synchronous language
Lustre that integrates a model-checking engine. Nonetheless, only the
code-generation part of the compiler is certified and not the
verification part.

Concerning observer-based model-checking, most of the works rely on an
automatic way to synthesize observers from a formal definition of the
properties. For instance, Aceto et al.~\cite{MCRTTA} propose a method
to verify properties based on the use of test automata. In this
framework, verification is limited to safety and bounded liveness
properties since the authors focus on properties that can be reduced
to reachability checking. In the context of Time Petri Net, Toussaint
et al.~\cite{TCVMBTPN} also propose a verification technique based on
``timed observers'', but they only consider four specific kinds of
time constraints. None of these works consider the complexity or the
correctness of the verification problem. Another related work
is~\cite{PTPS}, where the authors define observers based on Timed
Automata for each pattern.  Our approach is quite orthogonal to the
``synthesis approach''. Indeed we seek, for each property, to come up
with the best possible observer in practice. To this end, using our
toolchain, we compare the complexity of different implementations on a
fixed set of representative examples and for a specific set of
properties and kept the best candidates. The need to check multiple
implementations for the same patterns has motivated the need to
develop a lightweight verification method for checking their
correctness.  

Compared to these works, we make several contributions. We define a
complete verification framework for checking observers with hard
realtime constraints. This framework has been tested on a set of
observers derived from high-level timed specification patterns. This
work is also our first public application of the probe technology,
that was added to Fiacre only recently. To the best of our knowledge,
the notion of \emph{probes} is totally new in the context of formal
specification language. Paun and Chechik propose a somewhat similar
mechanism in~\cite{epp99,eeltp99}---in an untimed setting---where they
define new categories of events. However our approach is more general,
as we define probes for a richer set of events, such as variables
changing state. We believe that this (language-level) notion of probes
is interesting in its own right and could be adopted by other formal
specification languages. Finally, we propose a formal approach that
can be used to gain confidence on the implementation of our
model-checking tools and that replaces traditional ``visual
verification methods'' that are prone to human errors. This result
also prove the usefulness of having access to a complete toolbox that
provides different kind of tools: editors, model-checkers for
different kind of logics, \dots

{\footnotesize
}

 
\end{document}